\begin{document}

\newtheorem{definition}{Definition}
\newtheorem{lemma}{Lemma}
\newtheorem{corollary}{Corollary}
\newtheorem{theorem}{Theorem}
\newtheorem{example}{Example}
\newtheorem{proposition}{Proposition}
\newtheorem{remark}{Remark}
\newtheorem{assumption}{Assumption}
\newtheorem{corrolary}{Corrolary}
\newtheorem{property}{Property}
\newtheorem{ex}{EX}
\newtheorem{problem}{Problem}
\newcommand{\argmin}{\arg\!\min}
\newcommand{\argmax}{\arg\!\max}
\newcommand{\st}{\text{s.t.}}
\newcommand \dd[1]  { \,\textrm d{#1}  }

\makeatother

\title{\Large\bf A Compositional Approach to Safety-Critical Resilient Control for Systems with Coupled Dynamics}


\author{Abdullah Al Maruf$^{1*}$, Luyao Niu$^{2*}$, Andrew Clark$^2$, J. Sukarno Mertoguno$^3$, and Radha Poovendran$^1$ %
\thanks{*Authors contributed equally to this work.}
\thanks{$^1$Abdullah Al Maruf and Radha Poovendran are with the Network Security Lab, Department of Electrical and Computer Engineering,
University of Washington, Seattle, WA 98195-2500
        {\tt\small \{maruf3e,rp3\}@uw.edu}}%
\thanks{$^2$Luyao Niu and Andrew Clark are with the Department of Electrical and Computer Engineering, Worcester Polytechnic Institute, Worcester, MA 01609
{\tt\small \{lniu,aclark\}@wpi.edu}}
\thanks{$^3$J. Sukarno Mertoguno is with Information and Cyber Sciences Directorate, Georgia Tech Research Institute, Atlanta, GA 30332
{\tt\small \{karno\}@gatech.edu}}}

\thispagestyle{empty}
\pagestyle{empty}

\maketitle

\begin{abstract}

Complex, interconnected Cyber-physical Systems (CPS) are increasingly common in applications including smart grids and transportation. Ensuring safety of interconnected systems whose dynamics are coupled is challenging because the effects of faults and attacks in one sub-system can propagate to other sub-systems and lead to safety violations. In this paper, we study the problem of safety-critical control for CPS with coupled dynamics when some sub-systems are subject to failure or attack. We first propose resilient-safety indices (RSIs) for the faulty or compromised sub-systems that bound the worst-case impacts of faulty or compromised sub-systems on a set of specified safety constraints. By incorporating the RSIs, we provide a sufficient condition for the synthesis of control policies in each failure- and attack- free sub-systems. The synthesized control policies compensate for the impacts of the faulty or compromised sub-systems to guarantee safety. We formulate sum-of-square optimization programs to compute the RSIs and the safety-ensuring control policies. We present a case study that applies our proposed approach on the temperature regulation of three coupled rooms. The case study demonstrates that control policies obtained using our algorithm guarantee system's safety constraints.

\end{abstract}
\section{Introduction}\label{sec:intro}

Safety is an important property of cyber-physical systems (CPS) in multiple domains including power systems and transportation \cite{ames2016control,cohen2020approximate,fan2020fast}. Safety violations can potentially cause damage to the system and even endanger human lives \cite{sullivan2017cyber,Jeep}. To this end, safety verification and safety-critical control for CPS have been extensively studied \cite{ames2016control,cohen2020approximate,xu2018constrained,prajna2007framework}.

CPS have been shown to be vulnerable to random failures and cyber attacks, which can cause safety violations \cite{sullivan2017cyber,Jeep}. To mitigate the impacts of faults and cyber attacks, defending mechanisms and resilient control for CPS have garnered significant research attention \cite{bjorck2015cyber,zhu2015game,ivanov2016attack,fawzi2014secure}. Resilient and safety-critical control can be even more challenging for the class of CPS that are formed by the interconnection of sub-systems \cite{rinaldi2001identifying,zhang2019robustness}. 
Due to the couplings among the sub-systems, faults and attacks in one sub-system may lead to safety violation in other sub-systems and the overall CPS. For instance, the blackout in India in 2012 was caused by the escalation of a local and small disturbance thorough the interconnections of the power system \cite{india}. Thus, an intelligent adversary can utilize such cascading effects to compromise the controllers and cause maximum-impact safety violations. In addition, the dimension of coupled sub-systems grows with the number of sub-systems, posing a scalability challenge in safety verification and safety-critical control design. 

To alleviate the scalability challenge, compositional approaches which decompose the safety constraint over the sub-systems have been proposed \cite{nejati2020compositional,sloth2012compositional,lyu2022small,coogan2014dissipativity,yang2015fault,yang2020exponential}. These approaches do not consider the presence of attack. Specifically, compositional approaches to fault-tolerant safety-critical control for coupled CPS has received limited research attention.



In this paper, we aim to develop a compositional approach to safety-critical control for interconnected systems in which some of the sub-systems are faulty or compromised. We consider a class of interconnected systems with the sub-systems' dynamics being coupled, which we will refer as interconnected system or coupled system interchangeably. We propose two types of resilient-safety indices (RSIs), named as {\it intrinsic resilient-safety index} (IRSI) and {\it coupled resilient-safety index} (CRSI), using the self- and coupled-dynamics of each sub-system, respectively. The sign and magnitude of RSIs characterize and quantify the impacts of faulty or compromised sub-systems on specified safety constraints. Using the proposed RSIs as well as control barrier functions, we provide conditions and algorithm to design control laws for the remaining sub-systems that are fault- and attack-free. We make the following contributions:


\begin{itemize}

\item We define RSIs for the sub-systems that are vulnerable to failure or attack. The IRSI bounds the worst-case impact of the sub-system on safety constraint due to intrinsic/self-dynamics, whereas the CRSI bounds the worst-case impact introduced by the couplings. 

\item Utilizing the proposed RSIs, we derive the control policies in the fault- and attack-free sub-systems. We prove that our proposed control policies guarantee safety in the presence of faulty or compromised sub-systems.

\item We propose an algorithm based on sum-of-squares optimization to compute the RSIs and the safety-ensuring control policy in each fault- and attack-free sub-system independently. We discuss two special cases of linear systems and monotone systems, where computationally efficient methods can be used to estimate the RSIs.

\item We present a case study of temperature regulation of interconnected rooms to illustrate our approach. We demonstrate that the control policy obtained using our algorithm maintains the specified safety constraints. 
\end{itemize}

The rest of the paper is organized as follows. Section \ref{sec:related} presents the related works. Section \ref{sec:formulation} presents the system model and formulates the problem. Section \ref{sec:Res} introduces the indices and presents the condition for safety-ensuring control policies. Section \ref{sec:algo} presents the algorithms for computing the indices and synthesizing the control policies. Section \ref{sec:simulation} contains a case study. Section \ref{sec:conclusion} concludes the paper.

\section{Related Work}\label{sec:related}

Safety-critical CPS are widely seen in real-world applications including power systems \cite{sullivan2017cyber}, robotics \cite{alemzadeh2016targeted}, and intelligent transportation \cite{Jeep,koscher2010experimental}. In the absence of fault or malicious adversary in the system, safety verification and synthesis have been studied using model checking \cite{clarke1997model} and deductive verification \cite{manna2012temporal}. Recently, barrier function based approaches, which map the safety constraint to a linear constraint on the control policy, have attracted extensive research attention \cite{ames2016control,xu2018constrained,prajna2007framework}.

The existing safety verification approaches focusing on the overall CPS state space become computationally demanding and even intractable \cite{mitchell2005time,prajna2007framework} when applied to coupled systems. Compositional approaches, which decompose the safety constraint to those defined over low-dimensional sub-systems, have been proposed \cite{nejati2020compositional,sloth2012compositional,lyu2022small,coogan2014dissipativity}. The formulations in \cite{nejati2020compositional,sloth2012compositional,lyu2022small,coogan2014dissipativity} do not consider the presence of random failures or malicious attacks.



To address the presence of adversary and random failures in CPS, resilient and fault-tolerant CPS have been studied. Typical approaches include employing defense mechanisms against malicious attacks \cite{pajic2014robustness,fawzi2014secure,cardenas2008research} and designing intrusion tolerant system architectures \cite{castro2002practical,verissimo2003intrusion,mertoguno2019physics,abdi2018guaranteed,niu2022verifying}. For complex interconnected CPS, the adversary can compromise the entire system by intruding into a subset of sub-systems and leveraging the interconnection.

For interconnected CPS under malicious attack, attack detectability is investigated in \cite{gallo2020detectability} for linear systems with pair-wise interconnections. How to guarantee safety under malicious attacks for interconnected CPS is not considered in \cite{gallo2020detectability}. In \cite{yang2020exponential,yang2015fault}, safety is achieved for interconnected systems by reconfiguring the control law of each sub-system and the coupling topology among them, assuming that each sub-system is exponentially stable in the absence of coupling. 
However, reconfiguring the control law and coupling topology may not always be feasible for CPS under malicious attack.
In addition, verifying the reconfiguration over all possible interdependencies can be computationally expensive when the system is of large-scale. 
\section{System Model and Problem Formulation}\label{sec:formulation}

This section presents the system model and our problem formulation. Consider a system $\mathcal{S}$ consisting of a finite set of interconnected sub-systems, denoted as $\{\mathcal{S}_i\}_{i=1}^N$. Each sub-system $\mathcal{S}_i$ has 
\begin{align}
    \mathcal{S}_i:\dot{x}_i=&f_{i, slf}(x_i)+g_{i,slf}(x_i) u_i\nonumber\\
    &+f_{i,cpl}(x_i,x_{-i})+g_{i,cpl}(x_i,x_{-i})u_i
\end{align}
where $x_i\in\mathbb{R}^{n_i}$ is the state of sub-system $\mathcal{S}_i$, $x_{-i}\in\mathbb{R}^{n-n_i}$ is the state of the other sub-systems excluding $\mathcal{S}_i$, $u_i\in\mathbb{R}^{r_i}$ is the input to the sub-system $\mathcal{S}_i$, and $n=\sum_{i=1}^Nn_i$. Functions $f_{i, slf}:\mathbb{R}^{n_i}\rightarrow\mathbb{R}^{n_i}$, $g_{i,slf}:\mathbb{R}^{n_i}\rightarrow\mathbb{R}^{n_i\times r_i}$,  $f_{i,cpl}:\mathbb{R}^{n_i}\times\mathbb{R}^{n-n_i} \rightarrow\mathbb{R}^{n_i}$, and $g_{i,cpl}:\mathbb{R}^{n_i}\times \mathbb{R}^{n-n_i}\rightarrow\mathbb{R}^{n_i\times r_i}$ are Lipschitz. Note that functions $f_{i, slf}$ and $g_{i,slf}$ are dependent on the states of sub-system $\mathcal{S}_i$ only, and we refer to the term 
\begin{equation*}
    F_{i, slf}(x_i,u_i)\triangleq f_{i, slf}(x_i)+g_{i,slf}(x_i) u_i
\end{equation*}
as the \textit{self-dynamics} of sub-system $\mathcal{S}_i$. Since functions $f_{i,cpl}$ and $g_{i,cpl}$ are jointly determined by the states of $x_i$ and those of other sub-systems $x_{-i}$, we refer to 
\begin{equation*}
    F_{i, cpl}(x,u_i)\triangleq f_{i,cpl}(x_i,x_{-i})+g_{i,cpl}(x_i,x_{-i})u_i
\end{equation*}
as the \textit{coupled-dynamics} of sub-system $\mathcal{S}_i$. We further assume that the inputs to each sub-systems are bounded as $u_i \in \mathcal{U}_i$ where $\mathcal{U}_i=\prod_{j=1}^{r_i}[\underline{u}_{i,j},\overline{u}_{i,j}]$ with $\underline{u}_{i,j}<\overline{u}_{i,j}$. A control policy for sub-system $\mathcal{S}_i$ is a function $\mu_i:\mathbb{R}^n\rightarrow \mathcal{U}_i$ that maps from the set of system states to the set of control inputs.

Let $x=[x_1^\top ~\cdots~x_N^\top]^\top\in\mathbb{R}^n$ and $u=[u_1^\top~\cdots~u_N^\top]^\top\in\mathbb{R}^r$ where $r=\sum_{i=1}^Nr_i$. Then the dynamics of $\mathcal{S}$ can be written as
\begin{equation}\label{eq:joint dynamics}
    \mathcal{S}:\begin{bmatrix}\dot{x}_1\\
    \vdots\\
    \dot{x}_N
    \end{bmatrix}=\begin{bmatrix}
    F_1(x_1,x_{-1},u_1)\\
    \vdots\\
    F_N(x_N,x_{-N},u_N)
    \end{bmatrix}\triangleq F(x,u)
\end{equation}
where $F_i(x_i,x_{-i},u_i)=F_{i, slf}(x_i,u_i) +F_{i, cpl}(x,u_i)$.

We consider that system $\mathcal{S}$ is given $K$ safety constraints for all time $t\geq 0$ where $K \in \mathbb{Z}_{+}$. We suppose each safety constraint is represented as $h^k(x) \geq 0$ where $h^k:\mathbb{R}^n\rightarrow \mathbb{R}$ is a continuously differentiable function for each $k=1,\ldots,K$. We denote the corresponding safety set as $\mathcal{C}$ so that $\mathcal{C}=\cap_{k=1}^N \{x\in \mathbb{R}^n: h^k(x) \geq 0\}$. We assume $\mathcal{C}$ is compact.

Some sub-systems are subject to random failures and malicious attacks. In these scenarios, the actuator of a failed or attacked sub-system $\mathcal{S}_i$ does not behave as expected. In the following, we assume that there exists a subset of sub-systems such that they are protected and does not incur random failures or attack. We denote this set of sub-systems as $\{\mathcal{S}_i: i \in \mathcal{N}_1\}$ where $\mathcal{N}_1\subseteq\{1, 2, \ldots, N\}$ and refer to them as \textit{protected sub-systems}. In addition, the set of remaining sub-systems that are subject to random failures and malicious attack is denoted as $\{\mathcal{S}_i: i \in \mathcal{N}_2\}$ where $\mathcal{N}_2\subseteq\{1,\ldots,N\}$ and we to refer them as \textit{vulnerable sub-systems}. Note, $\mathcal{N}_1 \cup \mathcal{N}_2 = \{1, 2, \ldots, N\}$ and  $\mathcal{N}_1 \cap \mathcal{N}_2 = \emptyset$.

Each vulnerable sub-system $\mathcal{S}_i$ where $i\in\mathcal{N}_2$ may incur fault or cyber attack initiated by a malicious adversary. The fault or attack can alter the control input $u_i$ injected to $\mathcal{S}_i$ to arbitrary $\Tilde{u}_i\in\mathcal{U}_i$. Since the system is interconnected, the altered behaviors from sub-systems in $\mathcal{N}_2$ can further propagate to other protected sub-systems, leading to potential safety violation if the sub-systems in $\mathcal{N}_1$ are not properly controlled.

In this paper we aim at computing control policies for the protected sub-systems to ensure safety of system $\mathcal{S}$ irrespective of the states and inputs of the vulnerable sub-systems. We state the problem as follows:
\begin{problem} \label{prob1}
Consider an interconnected system $\mathcal{S}$ where the sub-systems in $\{\mathcal{S}_i: i \in \mathcal{N}_2\}$ are subject to failures and malicious attacks while the sub-systems in $\{\mathcal{S}_i: i \in \mathcal{N}_1\}$ are protected. Synthesize a control policy  $\mu_i:\mathbb{R}^n\rightarrow\mathcal{U}_i$ for each $i \in \mathcal{N}_1$ such that system $\mathcal{S}$ is safe with respect to $\mathcal{C}$. 
\end{problem}


\section{RSIs and RSI-based Safety Guarantee} \label{sec:Res}

In this section, we first propose resilient-safety indices (RSIs) which relate to the worst-case impacts on the safety constraints caused by the self-dynamics and the coupled-dynamics of the vulnerable sub-systems. Based on the RSI we then obtain the sufficient condition for control policies in the protected sub-systems so that safety constraints are guaranteed irrespective of the condition (i.e. whether being faulty/compromised or not) of any of the vulnerable sub-systems.

We first define the RSIs related to the self-dynamics of vulnerable sub-systems $\{\mathcal{S}_i: i\in \mathcal{N}_2 \}$ as follows:
\begin{definition}\label{def:gamma}
For each $i \in \mathcal{N}_2$ and $k=1,\ldots,K$, we define an intrinsic resilient-safety index (IRSI) of sub-system $\mathcal{S}_i$ with respect to function $h^k$ as
\begin{equation}\label{eq:gamma1}
\hat{\gamma}_i^k = \inf_{x\in\mathcal{C}, u_i\in\mathcal{U}_i} \left\{\frac{\partial h^k}{\partial x_i}F_{i,slf}(x_i,u_i)\right\} 
\end{equation}
\end{definition}
The IRSI $\hat{\gamma}_i^k$ models the worst-case impact from the self-dynamics of sub-system $\mathcal{S}_i$ on the safety constraint $h^k(x) \geq 0$. 
The non-negative value of $\hat{\gamma}_i^k$ indicates that the sub-system $\mathcal{S}_i$ is intrinsically resilient-safe with respect to constraint $h^k(x) \geq 0$ as the self-dynamics do not contribute to the safety violation of $h^k(x) \geq 0$ for any $u_i\in\mathcal{U}_i$. The negative value of $\hat{\gamma}_i^k$ indicates that the self-dynamics of $\mathcal{S}_i$ can potentially cause violation to the safety constraint $h^k(x) \geq 0$ in the presence of attack or fault (the smaller $\hat{\gamma}_i^k$ is, the more detrimental $\mathcal{S}_i$ intrinsically is in violating $h^k(x) \geq 0$).

However, when $\hat{\gamma}_i^k$ is not available or not easy to compute, we may instead approximate it by finding a bound $\gamma_i^k\in\mathbb{R}$ such that for all $x\in\mathcal{C}$ and  $u_i\in\mathcal{U}_i$
\begin{equation}\label{eq:gamma}
    \frac{\partial h^k}{\partial x_i}F_{i,slf}(x_i,u_i) \geq \gamma_i^k.
\end{equation}
By Definition \ref{def:gamma}, we have $\hat{\gamma}_i^k \geq \gamma_i^k$ for any $\gamma_i^k$ satisfying \eqref{eq:gamma}. 

Now we define the RSIs related to the coupled-dynamics of vulnerable sub-systems $\{\mathcal{S}_i: i\in \mathcal{N}_2 \}$ as follows:
\begin{definition}\label{def:beta}
For each $k=1,\ldots,K$, we define a coupled resilient-safety index (CRSI) for all vulnerable sub-systems $\mathcal{S}_i$ with respect to function $h^k$ as
\begin{align}\label{eq:beta1}
\hat{\beta}^k=\inf_{x\in\mathcal{C},u_i\in\mathcal{U}_i} \left\{\sum_{i\in\mathcal{N}_2}\frac{\partial h}{\partial x_i}F_{i,cpl}(x,u_i)\right\}
\end{align}
\end{definition}
The CRSI $\hat{\beta}^k$ models the worst-case impact from coupled dynamics of vulnerable sub-systems on the safety constraint $h^k(x) \geq 0$. 
The non-negative value of $\hat{\beta}^k$ indicates that the coupled-dynamics of the vulnerable sub-systems do not contribute to the safety violation of $h^k(x) \geq 0$ for any $u_i\in\mathcal{U}_i$ where $i \in \mathcal{N}_2$. The negative value of $\hat{\beta}^k$ indicates that the coupled-dynamics of the vulnerable sub-systems can potentially cause violation to the safety constraint $h^k(x) \geq 0$ in presence of attack or fault (the smaller $\hat{\beta}^k$ is, the more detrimental the coupled-dynamics of the vulnerable sub-systems are in violating $h^k(x) \geq 0$).

However, when $\hat{\beta}_i^k$ is not available or not easy to compute, we may approximate it by finding $\beta^k\in\mathbb{R}$ such that for all $x\in\mathcal{C}$ and  $u_i\in\mathcal{U}_i$
\begin{equation}\label{eq:beta}
\sum_{i\in\mathcal{N}_2}\frac{\partial h^k}{\partial x_i}F_{i,cpl}(x,u_i) \geq \beta^k
\end{equation}
By Definition \ref{def:beta}, we have $\hat{\beta}^k \geq \beta^k$ for any $\beta^k$ satisfying \eqref{eq:beta}. 


Next, we present our main result based on the IRSI and CRSI given in Definition \ref{def:gamma} and \ref{def:beta}. We derive a sufficient condition for control policies in the protected sub-systems such that system $\mathcal{S}$ satisfies all the safety constraints. The result is formalized below. 

\begin{theorem} \label{Th1}
Suppose there exist constants $\alpha_i^k\in[0,1]$ and a control policy $\mu_i:\mathbb{R}^n\rightarrow\mathcal{U}_i$ for each $i\in \mathcal{N}_1$ such that the following holds for all $i \in \mathcal{N}_1$ and $k=1, \ldots, K$:
\begin{align}
     & \frac{\partial h^k}{\partial x_i}F_{i}(x_i,x_{-i},u_i)
       \geq \alpha_i^k\left(-\eta_i^k(h^k(x))-\beta^k-\sum_{j\in\mathcal{N}_2}\gamma_j^k\right)  \label{eq:MR}
\end{align}
where $\gamma_j^k$, $\beta^k$ are given in Eqn. \eqref{eq:gamma} and \eqref{eq:beta}, $\eta_i^k$ is an extended class $\mathcal{K}$ function and $\sum_{i\in\mathcal{N}_1} \alpha_i^k=1$ for each $k=1, \ldots, K$. Then the interconnected system $\mathcal{S}$ is safe with respect to $\mathcal{C}$ for all $t \geq 0$ by taking control policy $\mu_i$ at each $i \in \mathcal{N}_1$ given that $x(0) \in \mathcal{C}$.
\end{theorem}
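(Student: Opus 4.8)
The plan is to show that, under the stated hypothesis, each safety function $h^k$ behaves as a valid (zeroing) control barrier function along the closed-loop trajectory regardless of what the vulnerable sub-systems do, and then to invoke the standard forward-invariance argument for barrier functions (e.g.\ \cite{ames2016control}).

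First I would fix an arbitrary $k\in\{1,\ldots,K\}$ and an arbitrary admissible behavior of the vulnerable sub-systems, i.e.\ arbitrary (measurable, $\mathcal{U}_i$-valued) signals $\Tilde{u}_i(t)$ for $i\in\mathcal{N}_2$, and let $x(t)$ be the resulting closed-loop trajectory obtained by applying $u_i=\mu_i(x)$ for $i\in\mathcal{N}_1$, with $x(0)\in\mathcal{C}$. On any time interval on which $x(t)\in\mathcal{C}$, I would differentiate $h^k$ along the trajectory and split the chain-rule sum over protected and vulnerable indices:
\[
\dot{h}^k(x)=\sum_{i\in\mathcal{N}_1}\frac{\partial h^k}{\partial x_i}F_i(x_i,x_{-i},\mu_i(x))+\sum_{i\in\mathcal{N}_2}\frac{\partial h^k}{\partial x_i}F_i(x_i,x_{-i},\Tilde{u}_i).
\]
For the first sum I would apply \eqref{eq:MR} termwise (with $u_i=\mu_i(x)$) and add over $i\in\mathcal{N}_1$; using $\sum_{i\in\mathcal{N}_1}\alpha_i^k=1$ this yields $\sum_{i\in\mathcal{N}_1}\frac{\partial h^k}{\partial x_i}F_i\ge-\eta^k(h^k(x))-\beta^k-\sum_{j\in\mathcal{N}_2}\gamma_j^k$, where I set $\eta^k\triangleq\sum_{i\in\mathcal{N}_1}\alpha_i^k\eta_i^k$. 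For the second sum I would decompose $F_i=F_{i,slf}+F_{i,cpl}$, lower-bound the self-dynamics contribution of each $i\in\mathcal{N}_2$ by $\gamma_i^k$ using \eqref{eq:gamma}, and lower-bound the aggregate coupled-dynamics contribution of $\mathcal{N}_2$ by $\beta^k$ using \eqref{eq:beta}; these bounds are uniform over $x\in\mathcal{C}$ and over all $\Tilde{u}_i\in\mathcal{U}_i$, so they are valid no matter which faults or attacks occur. Adding the two estimates, the $\beta^k$ and $\sum_{j\in\mathcal{N}_2}\gamma_j^k$ terms cancel, leaving the barrier inequality $\dot{h}^k(x)\ge-\eta^k(h^k(x))$ for every $x\in\mathcal{C}$.

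It then remains to (i) check that $\eta^k$ is an extended class $\mathcal{K}$ function — it is continuous, vanishes at the origin, and is strictly increasing because it is a convex combination of extended class $\mathcal{K}$ functions whose coefficients sum to one (hence are not all zero) — and (ii) conclude forward invariance. For (ii) I would run a maximal-interval argument: let $[0,T)$ be the maximal interval on which $x(t)\in\mathcal{C}$. On it the barrier inequality above holds simultaneously for all $k$, so a standard comparison argument together with $h^k(x(0))\ge0$ gives $h^k(x(t))\ge0$, and hence $x(t)\in\mathcal{C}$, for every $t\in[0,T)$ and every $k$. Since $\mathcal{C}$ is compact the solution cannot escape to infinity on $[0,T)$, and $x(T)=\lim_{t\to T^-}x(t)\in\mathcal{C}$ because $\mathcal{C}$ is closed; this lets me extend the trajectory past $T$, so in fact $T=\infty$ and $x(t)\in\mathcal{C}$ for all $t\ge0$. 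As the argument did not depend on $k$ or on the vulnerable-sub-system signals, this proves safety of $\mathcal{S}$.

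The only step that is more than bookkeeping is the bootstrapping in (ii): the indices \eqref{eq:gamma}--\eqref{eq:beta} and the hypothesis \eqref{eq:MR} are assumed only on $\mathcal{C}$, so the differential inequality $\dot{h}^k\ge-\eta^k(h^k)$ is a priori available only while the trajectory lies in $\mathcal{C}$; closing the loop requires combining the comparison lemma with the closedness and compactness of $\mathcal{C}$ (and with enough regularity of the $\mu_i$ for Carath\'eodory solutions to exist) to rule out the trajectory leaving $\mathcal{C}$. A secondary point to state carefully is that every estimate must be uniform in the adversary's choice of $\Tilde{u}_i$, which is precisely why the RSIs are defined as infima, equivalently uniform lower bounds, over $x\in\mathcal{C}$ and $u_i\in\mathcal{U}_i$.
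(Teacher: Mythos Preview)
Your proposal is correct and follows essentially the same route as the paper: split $\dot h^k$ over $\mathcal{N}_1$ and $\mathcal{N}_2$, apply \eqref{eq:MR} to the protected part, bound the vulnerable part via \eqref{eq:gamma} and \eqref{eq:beta}, cancel, and obtain $\dot h^k\ge-\eta^k(h^k)$ with $\eta^k=\sum_{i\in\mathcal{N}_1}\alpha_i^k\eta_i^k$, then invoke the barrier-function invariance result. The only difference is one of detail: the paper simply cites \cite{ames2016control} for the final invariance step, whereas you spell out the maximal-interval/comparison argument and the role of compactness and closedness of $\mathcal{C}$ in ensuring the barrier inequality, which is only asserted on $\mathcal{C}$, actually yields $x(t)\in\mathcal{C}$ for all $t\ge0$.
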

\begin{proof}
According to \eqref{eq:joint dynamics} we can compute $\frac{\partial h^k}{\partial x}F(x,u)$ for each $k=1,\ldots,K$ as
\begin{align}
    &\frac{\partial h^k}{\partial x}F(x,u) = \sum_{i=1}^N\frac{\partial h^k}{\partial x_i}F_i(x_i,x_{-i},u_i) \nonumber\\
    =&\sum_{i\in\mathcal{N}_1}\frac{\partial h^k}{\partial x_i}F_i(x_i,x_{-i},u_i)+\sum_{i\in\mathcal{N}_2}\frac{\partial h^k}{\partial x_i}F_i(x_i,x_{-i},u_i)\nonumber\\
    \geq &\sum_{i\in\mathcal{N}_1}\alpha_i^k[ -\eta_i^k(h^k(x))-\beta^k-\sum_{j\in\mathcal{N}_2}\gamma_j^k]\nonumber\\
    &\quad+\sum_{i\in\mathcal{N}_2}\frac{\partial h^k}{\partial x_i}[F_{i,slf}(x_i,u_i)+F_{i,cpl}(x_i,u_i)]\nonumber\\
    =& -\sum_{i\in\mathcal{N}_1}\alpha_i^k \eta_i^k(h^k(x)) +\sum_{i\in\mathcal{N}_2}\Big(\frac{\partial h^k}{\partial x_i}F_{i,slf}(x_i,u_i) - \gamma_i^k\Big)\nonumber\\
    &\quad\quad+\Big(\sum_{i\in\mathcal{N}_2} \frac{\partial h^k}{\partial x_i}F_{i,cpl}(x_i,,u_i) - \beta^k\Big)\nonumber\\
    \geq &-\sum_{i\in\mathcal{N}_1}\alpha_i^k \eta_i^k(h^k(x)) \nonumber
\end{align}
where the last inequality holds by Eqn. \eqref{eq:gamma} and \eqref{eq:beta}. Since that $\eta_i^k$ is an extended class $\mathcal{K}$ function and $\alpha_i^k\in[0,1]$, we have that $\sum_{i\in\mathcal{N}_1}\alpha_i^k\eta_i^k$ is also an extended class $\mathcal{K}$ function. Thus $\frac{\partial h^k}{\partial x}F(x,u)\geq -\eta^k(h^k(x))$. Using the property of control barrier function \cite{ames2016control} and the assumption that $x(0) \in \mathcal{C}$, we have that the coupled system satisfies that $h^k(x)\geq 0$ for all $k=1, \dots, K$ and $t\geq 0$. Therefore we have that system $\mathcal{S}$ is safe with respect to $\mathcal{C}$.
\end{proof}

In Theorem $\ref{Th1}$, constant $\alpha_i^k$ specifies the weight on each protected sub-system $\mathcal{S}_i$ to satisfy the safety constraint  $h^k(x) \geq 0$. For example, $\alpha_i^k=1$ means that we let sub-system $\mathcal{S}_i$ be solely responsible for maintaining safety constraint $h^k(x) \geq 0$ among all the protected sub-systems. Parameter $\alpha_i^k$ for $i=1, \ldots, m$ and $k=1, \ldots, K$ are chosen in a way so that the condition \eqref{eq:MR} is satisfied. 

Theorem 1 implies that if the approximated RSIs $\gamma_i^k$ and $\beta^k$ for the vulnerable sub-systems are known, then the control policies in the protected sub-systems that guarantee system's safety can be calculated without knowing the exact models of the vulnerable sub-systems. This is useful since the control input $\tilde{u}_i$ employed in a sub-system compromised by adversarial attack is usually unknown. However, from Theorem \ref{Th1}, it is clear that one should use true values of RSIs, i.e., $\hat{\gamma}_i^k$ and $\hat{\beta}^k$ as $\gamma_i^k$ and $\beta^k$ so that condition \eqref{eq:MR} is less restrictive. RSIs or their approximations can be computed either numerically or analytically. Below we present a simple example for which we find the closed-form expressions for $\gamma_i^k$ and $\beta^k$. This will help us to gain some insights on RSIs.

\medskip

\noindent \textbf{Example:} Consider a system $\mathcal{S}:\dot{x} = Ax+Bu$ where 
\begin{equation*}
    A=\begin{bmatrix}
    a_{11}&\ldots&a_{1N}\\
    \vdots&\ddots&\vdots\\
    a_{N1}&\ldots&a_{NN}
    \end{bmatrix},\quad B=\begin{bmatrix}
    b_{11}&\ldots&0\\
    \vdots&\ddots&\vdots\\
    0&\ldots&b_{NN}
    \end{bmatrix}
\end{equation*}
Here $A \in \mathbb{R}^{N \times N}$, $B \in \mathbb{R}^{N \times N}$, $x=[x_1 \hdots x_N]^\top\in\mathbb{R}^n$ and $u=[u_1 \hdots u_N]^\top$. Matrix $A$ represents the system matrix for a synchronization dynamics with $a_{ii} = -\sum_{j=1, j \neq i}^N a_{ij}<0$. We consider that the system is given one ellipsoid safety constraint $\mathcal{C}=\{x \in \mathbb{R}^N : h(x) \geq 0 \}$ where $h(x)= 1- \sum_{i=1}^N c_i x_i^2$ and $c_i > 0$. Let the constraints on input be $\mathcal{U}_i \in [-1,1]$. Let $\mathcal{N}_1=\{1, \ldots, m\}$ and $\mathcal{N}_2=\{m+1,\ldots, N\}$. Then, we can write 
\begin{align*}
&\frac{\partial h}{\partial x_i}F_{i,slf}(x_i,u_i)= -2 c_i x_i (a_{ii}x_i+b_{ii}u_i)\\
&\quad\quad=-2 c_i a_{ii} (x_i+\frac{b_{ii}u_i}{2a_{ii}})^2+\frac{b_{ii}^2c_iu_i^2}{2a_{ii}} \\
&\beta=\sum_{i=m+1}^N \big(-2c_ix_i\sum_{j=1,j\neq i}^N(a_{ij}x_j)\big)\\
&\quad\quad\geq \sum_{i=m+1}^N 2 c_i|a_{ii}|~ \min_{x \in \mathcal{C}}(x_ix_j)
\end{align*}
Then with some efforts it can be shown that $\gamma_i=-\frac{b_{ii}^2c_i}{2|a_{ii}|}$ and $\beta=- \frac{c_{\mathcal{N}_2}^{max}}{c^{min}} \sum_{i=m+1}^N|a_{ii}|$ satisfy the conditions \eqref{eq:gamma} and \eqref{eq:beta} respectively where $c_{\mathcal{N}_2}^{max}$ is the maximum value in $\{c_{m+1}, \ldots c_N\}$  and $c^{min}$ is the minimum value in $\{c_1, \ldots, c_N\}$. The expression of $\gamma_i$ implies that $\gamma_i$ is inversely related to the magnitude of the eigenvalue of the self-dynamics of that sub-system $\mathcal{S}_i$ (i.e. the higher the magnitude of eigenvalue or faster decreasing the dynamics, the lower $\gamma_i$). This indicates that a sub-system with fast decreasing self-dynamics plays less significant role in violating the safety constraint under fault or attack. Now, the expression of $\beta$ implies that the higher the coupling (i.e. $|a_{ii}| = |\sum_{j=1, j \neq i}^N a_{ij}|$ for $i \in \mathcal{N}_2$), the higher the magnitude of $\beta$ becomes. Therefore, the sub-systems that have higher couplings will be critical in violating the safety constraint under fault or attack.

\section{Algorithmic computation of RSIs and control policies} \label{sec:algo}

In this section we present algorithms to compute the control policies and associated RSIs i.e. $\gamma_i^k$ and $\beta_k$. Our proposed algorithms are based on sum-of-squares (SOS) optimization. Later, we present two special classes of systems for which RSIs can be computed more efficiently. 

In the remainder of this section, we make the following assumption on the system dynamics and the given safety constraints.

\begin{assumption}\label{assump:semi-algebraic}
We assume that $F(x,u)$ is polynomial in $x$ and $u$, and $h^k(x)$ is polynomial in $x$ for $k=1,\ldots,K$ respectively.
\end{assumption}

Based on the above assumption, we now aim to compute IRSI $\hat{\gamma}_i^k$ and CRSI $\hat{\beta}^k$ using SOS optimization. However, in general IRSI and CRSI are difficult to compute. Therefore we relax the condition and focus on finding approximated values of IRSI and CRSI. Specifically, we find $\gamma_i^k$ and $\beta^k$ so that inequalities \eqref{eq:gamma} and \eqref{eq:beta} are reasonably tight. For that we first show how to translate conditions \eqref{eq:gamma} and \eqref{eq:beta} into SOS constraints. Then we minimize $\gamma_i^k$ and $\beta^k$ in the SOS formulation to make \eqref{eq:gamma} and \eqref{eq:beta} reasonably tight. 

Following results formalize our computation method of $\gamma_i^k$ for each $i\in \mathcal{N}_2$ and $k \in \{1,\ldots, K\}$.

\begin{lemma} \label{lemma:1}
Suppose Assumption \ref{assump:semi-algebraic} holds and $p_s(x,u_i)$, $w_j(x,u_i)$ and $v_j(x,u_i)$ are SOS polynomials where $j=1, 2, \ldots , r_i$ and $s= 1, \ldots ,K$. For $i\in \mathcal{N}_2$ and $k \in \{1,\ldots,K\}$ if  $\gamma_i^k$ is the solution to the sum-of-squares program:
\begin{subequations}\label{eq:gamma SOS}
\begin{align}
    &\min_{\gamma_i^k}~ -\gamma_i^k\\
    &\st\frac{\partial h^k}{\partial x_i}F_{i,slf}(x_i,u_i)-\gamma_i^k-\sum_{s=1}^k p_s(x,u_i)h^s(x)\nonumber\\
    &\quad-\sum_{j=1}^{r_i}\big(w_j(x,u_i)(u_{i,j}-\underline{u}_{i,j})+v_j(x,u_i)(\overline{u}_{i,j}-u_{i,j})\big)\nonumber\\
    &\quad\quad\text{ is SOS} \label{eq:gamma SOS 1}
\end{align}
\end{subequations}
then $\gamma_i^k$ satisfies Eqn. \eqref{eq:gamma}. Furthermore $\gamma_i^k=\hat{\gamma}_i^k$ when expressions in \eqref{eq:gamma SOS 1} is quadratic.
\end{lemma}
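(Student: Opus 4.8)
The plan is to treat the ``is SOS'' clause in \eqref{eq:gamma SOS 1} as a Positivstellensatz-type certificate, read off \eqref{eq:gamma} by nonnegativity of every term, and then handle the exactness claim separately through the degree-two sum-of-squares representation theorem.

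\textbf{Soundness.} Suppose \eqref{eq:gamma SOS} is feasible with value $\gamma_i^k$. Then the ``is SOS'' requirement in \eqref{eq:gamma SOS 1} means there is an SOS polynomial $\sigma_0(x,u_i)$ such that
\begin{align*}
\frac{\partial h^k}{\partial x_i}F_{i,slf}(x_i,u_i)-\gamma_i^k
&=\sigma_0(x,u_i)+\sum_{s=1}^k p_s(x,u_i)h^s(x)\\
&\quad+\sum_{j=1}^{r_i}\big(w_j(x,u_i)(u_{i,j}-\underline{u}_{i,j})+v_j(x,u_i)(\overline{u}_{i,j}-u_{i,j})\big).
\end{align*}
Fix any $x\in\mathcal{C}$ and $u_i\in\mathcal{U}_i$. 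Then $h^s(x)\ge 0$ for every $s$ (in particular for $s\le k$), and $u_{i,j}-\underline{u}_{i,j}\ge 0$, $\overline{u}_{i,j}-u_{i,j}\ge 0$ for every $j$; moreover $\sigma_0$, $p_s$, $w_j$, $v_j$ are pointwise nonnegative since they are SOS. Hence each summand on the right-hand side is nonnegative, so the left-hand side is nonnegative, which is exactly \eqref{eq:gamma}. Definition \ref{def:gamma} then gives $\gamma_i^k\le\hat{\gamma}_i^k$, and since \eqref{eq:gamma SOS} maximizes $\gamma_i^k$ its optimal value is the tightest lower bound certifiable in this multiplier form.

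\textbf{Exactness in the quadratic case.} It remains to show $\gamma_i^k\ge\hat{\gamma}_i^k$, i.e. $(\hat{\gamma}_i^k,\cdot)$ is itself feasible for \eqref{eq:gamma SOS}, when the polynomial in \eqref{eq:gamma SOS 1} is quadratic in $(x,u_i)$. Since $\mathcal{C}$ is compact and $\mathcal{U}_i$ is a product of closed intervals, $\mathcal{C}\times\mathcal{U}_i$ is compact and the infimum in Definition \ref{def:gamma} is attained. I would then argue that with all data quadratic the constrained minimization defining $\hat{\gamma}_i^k$ admits a lossless S-procedure / Lagrangian certificate: set $\gamma=\hat{\gamma}_i^k$ and produce nonnegative (constant, hence SOS) multipliers $p_s,w_j,v_j$ for which the residual in \eqref{eq:gamma SOS 1} is a quadratic polynomial that is nonnegative on all of $\mathbb{R}^{n+r_i}$; by the classical fact that a nonnegative polynomial of degree two is a sum of squares, that residual is SOS, so $(\hat{\gamma}_i^k,p_s,w_j,v_j)$ is feasible. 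Concretely this is the strong duality between the semidefinite program that computes $\hat{\gamma}_i^k$ and its dual, which is precisely \eqref{eq:gamma SOS}; a Slater-type constraint qualification (nonempty interior of $\mathcal{C}$ and a non-degenerate box, which holds since $\underline{u}_{i,j}<\overline{u}_{i,j}$) makes the gap zero and the dual optimum attained.

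\textbf{Main obstacle.} The soundness half is routine substitution; the real content is the exactness half, and the degree-two hypothesis is essential because SOS relaxations of polynomial minimization are strict in general. The delicate points I expect are (i) invoking the correct lossless form of the S-procedure when several quadratic inequalities are simultaneously active (the constraints $h^s$ together with the $2r_i$ box facets), since the S-procedure need not be lossless for arbitrarily many quadratic constraints, and (ii) checking the constraint qualification that guarantees attainment on both sides of the SDP duality. If a fully general lossless result is unavailable, I would specialize to the situation the paper appears to target -- a single quadratic safety constraint and a scalar (or coordinatewise independent) input box -- where Yakubovich's S-lemma applies directly and the argument goes through verbatim.
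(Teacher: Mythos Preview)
Your soundness argument is correct and essentially identical to the paper's: both simply evaluate the SOS identity on $\mathcal{C}\times\mathcal{U}_i$ and read off \eqref{eq:gamma}.

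For the exactness half the paper takes a much shorter route. It argues by contradiction: if $\gamma_i^k<\hat{\gamma}_i^k$, then---because a nonnegative quadratic polynomial is automatically a sum of squares---the value $\hat{\gamma}_i^k$ would itself be feasible for constraint \eqref{eq:gamma SOS 1}, contradicting the optimality of $\gamma_i^k$. That is the entire argument; the paper does not explain which multipliers $p_s,w_j,v_j$ certify feasibility of $\hat{\gamma}_i^k$, so in effect it is invoking the same lossless-relaxation fact you reach via the S-procedure / SDP-duality route, only without naming it. Your treatment is the more careful one: you actually construct the certificate (constant multipliers, quadratic residual, then ``nonnegative quadratic $\Rightarrow$ SOS''), and you correctly flag that the S-procedure is not lossless for arbitrarily many quadratic side constraints---precisely the step the paper elides. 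What each approach buys: the paper's version is a two-line sketch that leans entirely on the degree-two equivalence; yours exposes the duality underneath and is honest about where a constraint qualification or a single-constraint specialization is needed to make the equality rigorous.
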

\begin{proof}
Let $\gamma_i^k$ be the solution to SOS program \eqref{eq:gamma SOS}. Since $p_s(x,u_i)$, $w_j(x,u_i)$ and $v_j(x,u_i)$ are SOS polynomials, we have that $\sum_{s=1}^k p_s(x,u_i)h^s(x)\geq 0$ for all $x\in \mathcal{C}$ and $\sum_{j=1}^{r_i}\big(w_j(x,u_i)(u_{i,j}-\underline{u}_{i,j})+v_j(x,u_i)(\overline{u}_{i,j}-u_{i,j})) \geq 0$ for all $u_i\in\mathcal{U}_i$. Thus any $\gamma_i^k$ rendering constraint \eqref{eq:gamma SOS 1} an SOS satisfies that
\begin{equation*}
    \frac{\partial h^k}{\partial x_i}F_{i,slf}(x_i,u_i) \geq \gamma_i^k ,~\forall x\in \mathcal{C},u_i\in\mathcal{U}_i
\end{equation*}
Now suppose the case when expressions in \eqref{eq:gamma SOS 1} is quadratic. Assume that $\gamma_i^k \neq \hat{\gamma}_i^k$. Then $\hat{\gamma}_i^k>\gamma_i^k$ by the definition of $\hat{\gamma}_i^k$. Since for quadratic polynomial SOS and non-negativity is equivalent \cite{powers2011positive}, $\hat{\gamma}_i^k$ is also feasible to constraint \eqref{eq:gamma SOS 1}. However, this contradicts the optimality of $\gamma_i^k$ to SOS program \eqref{eq:gamma SOS}. Hence $\gamma_i^k = \hat{\gamma}_i^k$ in this case.
\end{proof}

Similarly we use the following result to compute $\beta^k$ for each $k\in\{1,\ldots,K\}$.
\begin{lemma}\label{lemma:beta SOS}
Suppose Assumption \ref{assump:semi-algebraic} holds and $p_s(x,u_i)$, $w_{i,j}(x,u)$ and $v_{i,j}(x,u)$ are SOS polynomials where $i\in \mathcal{N}_2$, $j=1, 2, \ldots , r_i$ and $s= 1, \ldots ,K$. For $k \in \{1,\ldots,K\}$ if  $\beta^k$ is the solution to the sum-of-squares program:
\begin{subequations}\label{eq:beta SOS}
\begin{align}
    &\min_{\beta^k}~ -\beta^k\\
    &\st\sum_{i\in\mathcal{N}_2}\frac{\partial h^k}{\partial x_i}F_{i,cpl}(x,u_i) -\beta^k -\sum_{s=1}^k p_s(x,u_i)h^s(x)\nonumber\\
    &\quad\quad-\sum_{i\in\mathcal{N}_2}\sum_{j=1}^{r_i}\big(w_{i,j}(x,u)(u_{i,j}-\underline{u}_{i,j}) \nonumber\\
    &\quad\quad+v_{i,j}(x,u)(\overline{u}_{i,j}-u_{i,j})\big)\text{ is SOS}\label{eq:beta SOS 1}
\end{align}
\end{subequations}
then $\beta^k$ satisfies Eqn. \eqref{eq:beta}. Furthermore $\beta^k=\hat{\beta}^k$ when expressions in \eqref{eq:beta SOS 1} is quadratic.
\end{lemma}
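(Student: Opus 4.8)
The plan is to mirror the proof of Lemma~\ref{lemma:1}, with the self-dynamics term $\frac{\partial h^k}{\partial x_i}F_{i,slf}(x_i,u_i)$ replaced by the aggregate coupled-dynamics term $\sum_{i\in\mathcal{N}_2}\frac{\partial h^k}{\partial x_i}F_{i,cpl}(x,u_i)$, the single input box $\mathcal{U}_i$ replaced by the product box $\prod_{i\in\mathcal{N}_2}\mathcal{U}_i$, and the multipliers $w_j,v_j$ replaced by the doubly-indexed multipliers $w_{i,j},v_{i,j}$. First I would fix a feasible point of \eqref{eq:beta SOS}, that is, a value $\beta^k$ together with SOS multipliers $p_s$, $w_{i,j}$, $v_{i,j}$ certifying that the polynomial in \eqref{eq:beta SOS 1} is SOS. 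The three elementary facts needed are: (i) an SOS polynomial is non-negative everywhere; (ii) for $x\in\mathcal{C}$ each $h^s(x)\geq 0$, so $\sum_{s=1}^k p_s\,h^s(x)\geq 0$ on $\mathcal{C}$; and (iii) for $u_i\in\mathcal{U}_i$ we have $u_{i,j}-\underline{u}_{i,j}\geq 0$ and $\overline{u}_{i,j}-u_{i,j}\geq 0$, so the full double sum of the $w_{i,j}$ and $v_{i,j}$ terms is non-negative there. Evaluating the SOS expression of \eqref{eq:beta SOS 1} at an arbitrary $x\in\mathcal{C}$ and $u_i\in\mathcal{U}_i$, $i\in\mathcal{N}_2$, and rearranging so that the two non-negative multiplier sums sit on the right-hand side, yields $\sum_{i\in\mathcal{N}_2}\frac{\partial h^k}{\partial x_i}F_{i,cpl}(x,u_i)\geq\beta^k$, which is exactly \eqref{eq:beta}.

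For the tightness assertion I would argue by contradiction, exactly as in Lemma~\ref{lemma:1}. Suppose the polynomial in \eqref{eq:beta SOS 1} is quadratic and $\beta^k\neq\hat{\beta}^k$; since the optimal $\beta^k$ already satisfies \eqref{eq:beta} by the first part, Definition~\ref{def:beta} gives $\hat{\beta}^k\geq\beta^k$ and hence $\hat{\beta}^k>\beta^k$. For quadratic polynomials, being SOS is equivalent to being non-negative \cite{powers2011positive}, so in this regime the constraint \eqref{eq:beta SOS 1} reduces to a pointwise non-negativity condition that, with the appropriate multiplier structure, is equivalent to the feasibility requirement $\sum_{i\in\mathcal{N}_2}\frac{\partial h^k}{\partial x_i}F_{i,cpl}(x,u_i)\geq\hat{\beta}^k$ over $\mathcal{C}\times\prod_{i\in\mathcal{N}_2}\mathcal{U}_i$. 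Since $\hat{\beta}^k$ meets that requirement by its definition as the infimum in \eqref{eq:beta1}, $\hat{\beta}^k$ is feasible for \eqref{eq:beta SOS}, contradicting the optimality of $\beta^k$; therefore $\beta^k=\hat{\beta}^k$.

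I expect the first part to be entirely routine --- it is the standard S-procedure / Putinar-type certificate argument and amounts to sign-bookkeeping over finite sums. The delicate step is the tightness claim, namely that in the quadratic regime the relaxation introduced by the SOS multipliers is lossless: one must argue that $\hat{\beta}^k$, which is defined purely by the infimum \eqref{eq:beta1} over the compact set $\mathcal{C}\times\prod_{i\in\mathcal{N}_2}\mathcal{U}_i$, can indeed be certified by admissible SOS multipliers $p_s,w_{i,j},v_{i,j}$. This relies on the exactness of the quadratic (S-lemma-type) positivity certificate for the region in question; I would either invoke that fact in the same spirit as the proof of Lemma~\ref{lemma:1} or, for a fully rigorous statement, restrict the ``quadratic'' hypothesis to the case where such exactness is guaranteed. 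All remaining details transcribe directly from the proof of Lemma~\ref{lemma:1}.
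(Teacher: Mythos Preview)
Your proposal is correct and follows essentially the same approach as the paper's own proof: establish \eqref{eq:beta} by non-negativity of the SOS expression plus non-negativity of the multiplier sums on $\mathcal{C}\times\prod_{i\in\mathcal{N}_2}\mathcal{U}_i$, then handle the quadratic tightness claim by the same contradiction argument used for Lemma~\ref{lemma:1}. In fact the paper is even terser than you are --- it simply states that $\beta^k=\hat{\beta}^k$ in the quadratic case ``using arguments similar to the proof of Lemma~\ref{lemma:1}'' without spelling out the exactness caveat you correctly flag.
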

\begin{proof}
Let $\beta^k$ be the solution to SOS program \eqref{eq:beta SOS}. Since $p_s(x,u)$, $w_{i,j}(x,u)$ and $v_{i,j}(x,u)$ are SOS polynomials, we have that $\sum_{s=1}^k p_s(x,u_i)h^s(x)\geq 0$ for all $x\in\mathcal{C}$ and $\sum_{i\in\mathcal{N}_2}\sum_{j=1}^{r_i}\big(w_{i,j}(x,u)(u_{i,j}-\underline{u}_{i,j})+v_{i,j}(x,u)(\overline{u}_{i,j}-u_{i,j})\big)$ for all $u_i\in\mathcal{U}_i$ and $i\in \mathcal{N}_2$. Thus $\beta^k$ satisfies that
\begin{multline*}
    \sum_{i\in\mathcal{N}_2}\frac{\partial h^k}{\partial x_i}F_{i,cpl}(x,u_i) \geq \beta^k,~\forall x\in\mathcal{C},u_i\in\mathcal{U}_i, i \in \mathcal{N}_2
\end{multline*}
We note that $\beta^k=\hat{\beta}^k$ when expressions in \eqref{eq:beta SOS 1} is quadratic using arguments similar to the proof of Lemma \ref{lemma:1}.
\end{proof}

Now we present our algorithm for computing control policies given the approximated IRSI and CRSI, i.e. $\gamma_i^k$ and $\beta^k$. To do so, we translate the condition given in \eqref{eq:MR} as SOS constraint and formulate an SOS program to compute the control input $u_i$ for each $i\in\mathcal{N}_1$. The following lemma presents the result.

\begin{lemma}
Suppose $\gamma_i^k$ and $\beta^k$ are given for each $i\in\mathcal{N}_2$ and $k\in\{1,\ldots,K\}$. Suppose the following expressions are SOS for each $i\in\mathcal{N}_1$ and $k\in\{1,\ldots,K\}$.
\begin{subequations}\label{eq:feasibility SOS}
\begin{align}
    &\frac{\partial h^k}{\partial x_i}[f_{i,slf}(x_i)+g_{i,slf}(x_i)\tau_i(x) + f_{i,cpl}(x_i,x_{-i})\nonumber \\
    & \quad +g_{i,cpl}(x_i,x_{-i})\tau_i(x)] - \alpha_i^k\Big(-\eta_i^k(h^k(x))-\beta^k \nonumber \\
    & \quad \quad -\sum_{j\in\mathcal{N}_2}\gamma_j^k\Big) -\sum_{s=1}^K\lambda_s(x)h^s(x),  \label{eq:feasibility SOS 1} \\
    &\tau_{i,j}(x)-\underline{u}_{i,j},\quad \overline{u}_{i,j}-\tau_{i,j}(x),~\forall j=1,\ldots,r_i,  \label{eq:feasibility SOS 2}
\end{align}
\end{subequations}
where $\lambda_s(x)$  is an SOS polynomial and $\tau_{i,j}(x)$ is a polynomial in $x$ for $i\in\mathcal{N}_1$, $j=1,\ldots,r_i$ and $s=1, \ldots, K$. Then condition \eqref{eq:MR} is satisfied for $x\in\mathcal{C}$ when $u_i$ is chosen as $u_i= \tau_i(x) =[\tau_{i,1}(x) \cdots \tau_{i,r_i}(x)]^T$ for all $i \in \mathcal{N}_1$.
\end{lemma}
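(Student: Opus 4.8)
The plan is to reproduce, for the closed-loop vector field, the same Positivstellensatz-style reasoning already used in the proofs of Lemma \ref{lemma:1} and Lemma \ref{lemma:beta SOS}: an SOS polynomial is globally nonnegative, and adding an SOS-weighted combination of the $h^s$ localizes the resulting inequality to the semialgebraic set $\mathcal{C}$. The only new ingredient is the substitution $u_i = \tau_i(x)$, which turns the open-loop condition \eqref{eq:MR} into a polynomial inequality in $x$ alone.

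First I would note that, upon setting $u_i = \tau_i(x)$, the bracketed term in \eqref{eq:feasibility SOS 1} is exactly
\[
f_{i,slf}(x_i)+g_{i,slf}(x_i)\tau_i(x)+f_{i,cpl}(x_i,x_{-i})+g_{i,cpl}(x_i,x_{-i})\tau_i(x)=F_i(x_i,x_{-i},\tau_i(x)),
\]
so that the hypothesis that \eqref{eq:feasibility SOS 1} is SOS says precisely that the polynomial
\[
\frac{\partial h^k}{\partial x_i}F_i(x_i,x_{-i},\tau_i(x))-\alpha_i^k\Big(-\eta_i^k(h^k(x))-\beta^k-\sum_{j\in\mathcal{N}_2}\gamma_j^k\Big)-\sum_{s=1}^K\lambda_s(x)h^s(x)
\]
is nonnegative for all $x\in\mathbb{R}^n$, for each $i\in\mathcal{N}_1$ and $k\in\{1,\ldots,K\}$.

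Next I would localize to $\mathcal{C}$: since each $\lambda_s$ is SOS and $h^s(x)\ge 0$ for every $x\in\mathcal{C}$ by definition of $\mathcal{C}$, the term $\sum_{s=1}^K\lambda_s(x)h^s(x)$ is nonnegative on $\mathcal{C}$. Subtracting this nonnegative quantity from a nonnegative polynomial shows that on $\mathcal{C}$
\[
\frac{\partial h^k}{\partial x_i}F_i(x_i,x_{-i},\tau_i(x))\ \ge\ \alpha_i^k\Big(-\eta_i^k(h^k(x))-\beta^k-\sum_{j\in\mathcal{N}_2}\gamma_j^k\Big),
\]
which is exactly \eqref{eq:MR} evaluated at $u_i=\tau_i(x)$. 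It remains only to confirm that $\tau_i$ is an admissible policy, i.e. $\tau_i(x)\in\mathcal{U}_i$; this is immediate from \eqref{eq:feasibility SOS 2}, because SOS polynomials are globally nonnegative, so $\tau_{i,j}(x)-\underline{u}_{i,j}\ge 0$ and $\overline{u}_{i,j}-\tau_{i,j}(x)\ge 0$ for all $j$, hence $\tau_i(x)\in\prod_{j=1}^{r_i}[\underline{u}_{i,j},\overline{u}_{i,j}]=\mathcal{U}_i$.

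I do not anticipate a genuine obstacle; the proof is essentially bookkeeping. The two points that need care are (i) checking that the substitution $u_i=\tau_i(x)$ reassembles $F_i$ verbatim from its self- and coupled-dynamics pieces, so that the left-hand side of \eqref{eq:MR} is reproduced exactly rather than up to some remainder, and (ii) being explicit that the multipliers $\lambda_s$ certify the inequality only on $\mathcal{C}$ (not globally), which is all \eqref{eq:MR} asks for and is precisely why the SOS constraint is written as a conditional certificate. Note that the lemma stops at \eqref{eq:MR}; to conclude actual safety of $\mathcal{S}$ one would additionally invoke Theorem \ref{Th1} with $\mu_i=\tau_i$, but that is outside the scope of this statement.
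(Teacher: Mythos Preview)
Your proposal is correct and follows essentially the same approach as the paper: use nonnegativity of the SOS expression \eqref{eq:feasibility SOS 1}, drop the nonnegative term $\sum_s\lambda_s(x)h^s(x)$ on $\mathcal{C}$, and read off \eqref{eq:MR} with $u_i=\tau_i(x)$. You are in fact slightly more thorough than the paper, since you explicitly verify from \eqref{eq:feasibility SOS 2} that $\tau_i(x)\in\mathcal{U}_i$, a point the paper's proof leaves implicit.
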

\begin{proof}
Since $\lambda_s(x)$ is an SOS polynomial, we have that $\lambda_s(x)h^s(x)\geq 0$ implying $\sum_{s=1}^K\lambda_s(x)h^s(x) \geq 0$ for all $x \in \mathcal{C}$. When Eqn. \eqref{eq:feasibility SOS} is an SOS, we thus have that 
\begin{multline}
    \frac{\partial h^k}{\partial x_i}F_{i,slf}(x_i,\tau_i(x)) + F_{i,cpl}(x,\tau_i(x)) \nonumber\\
    - \alpha_i^k\Big(-\eta_i^k(h^k(x))-\beta^k  -\sum_{j\in\mathcal{N}_2}\gamma_j^k\Big) \geq 0
\end{multline}
holds for all $x\in \mathcal{C}$. By choosing $u_i=\tau_i(x)$ for all $i \in \mathcal{N}_1$, we recover the condition in Eqn. \eqref{eq:MR}.
\end{proof}

  \begin{center}
  	\begin{algorithm}[!htp]
  		\caption{Solution algorithm for finding control policy}
  		\label{algo:C1}
  		\begin{algorithmic}[1]
  			\State \textbf{Input}: Dynamics $F$. Functions $\{h^k\}_{k=1}^K$. Constants $\{\alpha_i^k\},\{\underline{u}_{i,j}\},\{\overline{u}_{i,j}\}$.
  			\State \textbf{Output:} $\{\gamma_i^k\}_{i\in\mathcal{N}_2}$, $\beta^k$, and control inputs $\{u_i\}$.
            \State Solve the SOS programs in Eqn. \eqref{eq:gamma SOS} and \eqref{eq:beta SOS} to calculate $\gamma_i^k$ and $\beta^k$ for each for each $i\in\mathcal{N}_2$ and $k\in\{1,\ldots,K\}$.
            \State  \textbf{Initialization:} $Flag \leftarrow 1$.
            \For{$i\in \mathcal{N}_1$}
            \State Solve \eqref{eq:feasibility SOS} using $\gamma_i^k$ and $\beta^k$ for all $k=1, \ldots K$. \label{linecheck}
            \If{Line \ref{linecheck} is feasible}
            \State $u_i\leftarrow\tau_i$. 
            \Else 
            \State $Flag \leftarrow 0$.
            \State \textbf{break}
            \EndIf
            \EndFor
            \If{$Flag=1$}
            \State \textbf{return} $\{\gamma_i^k\}$, $\{\beta^k\}$ and $\{u_i\}$.
            \Else 
            \State Not feasible.
            \EndIf
  		\end{algorithmic}
  	\end{algorithm}
  \end{center}

Now we present Algorithm \ref{algo:C1} that uses the above results to find the control policies in the protected sub-systems. First the approximated RSIs $\gamma_i^k$ and $\beta^k$ are calculated using Lemma \ref{lemma:1} and \ref{lemma:beta SOS} for each vulnerable sub-system and safety constraint. Then control policy $\mu_i$ for each $i \in \mathcal{N}_1$ is calculated from line 5 to line 12.
If the SOS conditions in Eqn. \eqref{eq:feasibility SOS} is feasible for all $i \in \mathcal{N}_1$ and $k \in \{1, \ldots K\}$, the algorithm returns the control policies for all the protected sub-systems. We remark that Algorithm \ref{algo:C1} can be implemented in an offline manner and it can calculate the control policy for each protected sub-system in parallel.


It may be possible that there exists no control policy that guarantees safety of the system depending on the cardinality of $\mathcal{N}_1$ or the ranges of $\mathcal{U}$ and $\mathcal{C}$ or the model dynamics $F(x,u)$. In that case Algorithm \ref{algo:C1} becomes infeasible, i.e. fails to find feasible control policies that guarantee safety. However, it may happen that Algorithm \ref{algo:C1} is infeasible for $\mathcal{C}$ but feasible for some set $\tilde{\mathcal{C}} \subset \mathcal{C}$. In such scenario one need to modify the algorithm and search for the set $\tilde{\mathcal{C}}$ such that the algorithm becomes feasible. In that case the system $\mathcal{S}$ will maintain safety for all $t\geq 0$ if $x(0) \in \tilde{\mathcal{C}}$.

Here we make a remark on the case when any of the safety constraints is local to a sub-system. Suppose there exist $\tilde{k} \in \{1, \ldots, k\}$ and $\tilde{i} \in \{1, \ldots, N\}$ such that $h^{\tilde{k}}:\mathbb{R}^{n_{\tilde{i}}}\rightarrow\mathbb{R}$ is local to sub-system $\mathcal{S}_{\tilde{i}}$ which can be written as $h^{\tilde{k}}(x_{\tilde{i}})$. This implies $\frac{\partial h^{\tilde{k}}}{\partial x_j}$ is a vector with all entries being zero for $j\neq \tilde{i}$. If $\tilde{i} \in \mathcal{N}_1$, then we have $\sum_{j \in \mathcal{N}_2}\hat{\gamma}_j^{\tilde{k}} = \hat{\beta}^{\tilde{k}}= 0$. This implies $|\mathcal{N}_1|-1$ conditions in Theorem \ref{Th1} are trivially satisfied and can be omitted from the SOS program. However, if $\tilde{i} \in \mathcal{N}_2$, then our approach can be modified to obtain less conservative SOS formulation. In this case, instead of calculating $\gamma_{\tilde{i}}^{\tilde{k}}$ (note, $\hat{\gamma}_j^{\tilde{k}} = 0$ for $j\neq {\tilde{i}}$) and $\beta^{\tilde{k}}$, we can impose the following constraint directly in each of the SOS program: $\frac{\partial h^{\tilde{k}}}{\partial x_{\tilde{i}}}F_{\tilde{i}}(x_{\tilde{i}}, x_{-{\tilde{i}}},u_{\tilde{i}}) \geq -\eta_{\tilde{i}}^{\tilde{k}}( h^{\tilde{k}}(x_{\tilde{i}}))$ for all $x\in \mathcal{C}$ and $u_{\tilde{i}} \in \mathcal{U}_{\tilde{i}}$. This condition is less conservative than the constraint $\gamma_{\tilde{i}}^{\tilde{k}} + \beta^{\tilde{k}} \geq -\eta_{\tilde{i}}^{\tilde{k}}( h^{\tilde{k}}(x_{\tilde{i}}))$ obtained via Theorem 1. We note that $\frac{\partial h^{\tilde{k}}}{\partial x_{\tilde{i}}}F_{\tilde{i}}(x_{\tilde{i}}, x_{-{\tilde{i}}},u_{\tilde{i}}) \geq -\eta_{\tilde{i}}^{\tilde{k}}( h^{\tilde{k}}(x_{\tilde{i}}))$ introduces infeasibility if the impact of compromised $u_{\tilde{i}} \in \mathcal{U}_{\tilde{i}}$ on $h^{\tilde{k}}(x_{\tilde{i}})$ cannot be compensated by the system states. For an example see Section \ref{sec:simulation}.

Now we present two special cases where the IRSI $\hat{\gamma}_i^k$ and CRSI $\hat{\beta}^k$ can be obtained more efficiently relative to SOS approach.

\subsubsection{LTI System with Half-Plane Constraint}
Consider system $\mathcal{S}$ be given as
\begin{equation}
    \mathcal{S}:\dot{x} = Ax+Bu,
\end{equation}
where 
\begin{equation*}
    A=\begin{bmatrix}
    A_{11}&\ldots&A_{1N}\\
    \vdots&\ddots&\vdots\\
    A_{N1}&\ldots&A_{NN}
    \end{bmatrix},\quad B=\begin{bmatrix}
    B_{11}&\ldots&\mathbf{0}\\
    \vdots&\ddots&\vdots\\
    \mathbf{0}&\ldots&B_{NN}
    \end{bmatrix}
\end{equation*}
$A_{ij}\in\mathbb{R}^{n_i\times n_j}$, and $B_{ii}\in\mathbb{R}^{n_i\times r_i}$. We consider $h^k(x)=a_k^\top x$ where $a_k\in\mathbb{R}^{n}$ for all $k=1,\ldots,K$ and set $\mathcal{C}$ is compact. We show that IRSI $\hat{\gamma}_i^k$ can be computed using a linear program when $\mathcal{U}_i=\{u_i:w^\top u_i\geq 0\}$. For each $i\in\mathcal{N}_2$, if $\gamma_i^k=a_k^\top F_{i,slf}(x_i^*,u_i^*)$, where $(x_i^*,u_i^*)$ is the solution to the following linear program:
\begin{subequations}\label{eq:gamma LP}
\begin{align}
    \min_{x_i,u_i}~ &a_k^\top F_{i,slf}(x_i,u_i)\\
    \st ~&a_k^\top x\geq 0,~\forall k\label{eq:gamma LP 1}\\
    &w^\top u_i\geq 0
\end{align}
\end{subequations}
then $\gamma_i^k=\hat{\gamma}_i^k$ as given in Definition \ref{def:gamma}. 
Let $\mathcal{P}$ be the polyhedron induced by the constraints in Eqn. \eqref{eq:gamma LP}. We have that $\hat{\gamma}_i^k$ is attained at some vertex of $\mathcal{P}$. Similarly we can consider the computation of CRSI $\hat{\beta}^k$. For each sub-system $i\in\mathcal{N}_2$, if $\beta^k=\sum_{i\in\mathcal{N}_2}a_k^\top F_{i,cpl}(x^*,u_i^*)$, where $[x^*,u_i^*]$ is the solution to the following linear program:
\begin{subequations}\label{eq:beta LP}
\begin{align}
    \min_{x,u_i}~ &\sum_{i\in\mathcal{N}_2}a_k^\top F_{i,cpl}(x,u_i)\\
    \st ~&a_k^\top x\geq 0,~\forall k\label{eq:beta LP 1}\\
    &w^\top u_i\geq 0
\end{align}
\end{subequations}
then $\beta^k=\hat{\beta}^k$ as given in Definition \ref{def:beta}.

\subsubsection{Monotone System with Hyperrectangle Constraints}
In the following, we consider monotone systems under hyperrectangle constraints. Consider system \eqref{eq:joint dynamics} and a safety set $\mathcal{C}=[\underline{x},\overline{x}]$, where $\underline{x},\overline{x}\in\mathbb{R}^n$. We consider $\underline{x}\leq \overline{x}$ element-wise, and thus $\mathcal{C}$ is a hyperrectangle. 
\begin{definition}[Monotonicity \cite{coogan2017finite}]\label{def:monotone}
Function $F(x,u)$ is monotone with respect to $x$ and $u$ if
\begin{equation}
    x\leq x' \text{ and } u\leq u'\implies F(x,u)\leq F(x',u')
\end{equation}
where the order relation $\leq$ is compared element-wise. 
\end{definition}

Monotonicity of a function $F(x,u)$ can be shown by verifying the signs of $\frac{\partial F}{\partial x}$ and $\frac{\partial F}{\partial u}$ \cite{coogan2017finite}. Using Definition \ref{def:monotone}, Eqn. \eqref{eq:gamma} and \eqref{eq:beta}, we have the following result:
\begin{lemma}
Consider system \eqref{eq:joint dynamics} and a hyperrectangle safety set $\mathcal{C}=[\underline{x},\overline{x}]$. If functions $\tilde{F}_{i,slf}(x,u_i)=\frac{\partial h^k}{\partial x_i} F_{i,slf}(x_i,u_i)$ and $\tilde{F}_{i,cpl}(x,u_i)=\frac{\partial h^k}{\partial x_i} F_{i,cpl}(x,u_i)$ are monotone with respect to $x$ and $u$ as given in Definition \ref{def:monotone}, then  for $i \in \mathcal{N}_2$ and $k=1,\ldots,K$ we have 
\begin{align*}
    \hat{\gamma}_i^k=\tilde{F}_{i,slf}(\underline{x},\underline{u}_i), \\
    \hat{\beta}^k=\sum_{i\in\mathcal{N}_2}\tilde{F}_{i,cpl}(\underline{x},\underline{u}_i).
\end{align*}
\end{lemma}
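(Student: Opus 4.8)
The plan is to exploit the product-order structure of the feasible region. Since $\mathcal{C}=[\underline{x},\overline{x}]$ and each $\mathcal{U}_i=\prod_{j=1}^{r_i}[\underline{u}_{i,j},\overline{u}_{i,j}]$ are hyperrectangles whose least element (in the componentwise order) is $(\underline{x},\underline{u}_i)$, and since $\tilde{F}_{i,slf}$ and $\tilde{F}_{i,cpl}$ are monotone, I would argue that the infima defining $\hat{\gamma}_i^k$ in \eqref{eq:gamma1} and $\hat{\beta}^k$ in \eqref{eq:beta1} are attained at this least corner. First I would treat the IRSI: fix $i\in\mathcal{N}_2$ and $k$, and take any $x\in\mathcal{C}$, $u_i\in\mathcal{U}_i$. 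Then $\underline{x}\leq x$ and $\underline{u}_i\leq u_i$ elementwise, so $(\underline{x},\underline{u}_i)\leq(x,u_i)$, and Definition \ref{def:monotone} applied to $\tilde{F}_{i,slf}$ gives $\tilde{F}_{i,slf}(\underline{x},\underline{u}_i)\leq\tilde{F}_{i,slf}(x,u_i)$. Hence $\tilde{F}_{i,slf}(\underline{x},\underline{u}_i)$ is a lower bound valid for all $x\in\mathcal{C},u_i\in\mathcal{U}_i$, i.e.\ it satisfies \eqref{eq:gamma}; and because $(\underline{x},\underline{u}_i)\in\mathcal{C}\times\mathcal{U}_i$ the bound is achieved, so it equals the infimum, giving $\hat{\gamma}_i^k=\tilde{F}_{i,slf}(\underline{x},\underline{u}_i)$.

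Next I would handle the CRSI, where the only new point is that the summands decouple in the control variables. Each term $\tilde{F}_{i,cpl}(x,u_i)$ in $\sum_{i\in\mathcal{N}_2}\tilde{F}_{i,cpl}(x,u_i)$ depends on the common state $x$ and on its own input $u_i$ only. Monotonicity of $\tilde{F}_{i,cpl}$ with respect to $(x,u)$, combined with the fact that it is independent of $u_j$ for $j\neq i$, implies each term is minimized over the box at $(\underline{x},\underline{u}_i)$. All terms prefer $x=\underline{x}$, and choosing $u_i$ for one index does not affect any other term, so the minimizations can be performed simultaneously; therefore the joint minimum of the sum over $x\in\mathcal{C}$ and $\{u_i\in\mathcal{U}_i\}_{i\in\mathcal{N}_2}$ is $\sum_{i\in\mathcal{N}_2}\tilde{F}_{i,cpl}(\underline{x},\underline{u}_i)$, which is feasible, yielding $\hat{\beta}^k=\sum_{i\in\mathcal{N}_2}\tilde{F}_{i,cpl}(\underline{x},\underline{u}_i)$ and, equivalently, the tightest constant $\beta^k$ satisfying \eqref{eq:beta}.

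There is no deep obstacle here: the result is a direct consequence of monotonicity on a product order, and the corner point lying in the feasible set makes attainment of the infimum automatic (so compactness of $\mathcal{C}$ and $\mathcal{U}_i$ and Lipschitzness of the maps are not strictly needed). The one place requiring care is the decoupling step for the CRSI: Definition \ref{def:monotone} states monotonicity jointly in $(x,u)$, so I would make explicit that each $\tilde{F}_{i,cpl}$ being independent of the other sub-systems' inputs is what allows minimizing the whole sum componentwise without any conflict between terms.
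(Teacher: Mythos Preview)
Your argument is correct and is precisely the elementary monotonicity reasoning one would expect: on a hyperrectangle the componentwise least point $(\underline{x},\underline{u}_i)$ is feasible, and monotonicity in the sense of Definition~\ref{def:monotone} forces the infimum in \eqref{eq:gamma1} and \eqref{eq:beta1} to be attained there; your decoupling observation for the CRSI sum is the right way to handle the joint minimization. The paper in fact states this lemma without proof, treating it as an immediate consequence of Definition~\ref{def:monotone} together with Eqn.~\eqref{eq:gamma} and \eqref{eq:beta}, so your proposal simply fills in the omitted details and there is nothing further to compare.
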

\section{Case Study} \label{sec:simulation}

In this section, we illustrate our proposed approach using an example on the temperature regulation in a circular building consisting of $N$ rooms \cite{meyer2017compositional}. For each room $i=1,\ldots,N$, we denote its temperature as $x_i$ which follows dynamics given as
\begin{equation}
    \dot{x}_i=\frac{1}{\delta}(w(x_{i+1}+x_{i-1}-2x_i) + y(T_e-x_i)+z(T_h-x_i)u_i),
\end{equation}
where $x_{i+1}$ and $x_{i-1}$ are the temperatures of the neighboring rooms, $T_e$ is the outside temperature, and $T_h$ is the heater temperature. For rooms $i=1$ and $i=N$, we let $x_{0}=x_N$ and $x_{N+1}=x_1$. In this case study, we consider $N=3$, $T_e=-1^\circ C$, and $T_h=50^\circ C$. We additionally let $\mathcal{U}_1\in[0,0.6]$ and $\mathcal{U}_2,\mathcal{U}_3\in[-2,2]$. The coefficients $w$, $y$, and $z$ are chosen as $w=0.45$, $y=0.045$, and $z=0.09$, respectively. Parameter $\delta$ is set as $\delta=0.1$. We consider that the controller of room $1$ is compromised via an adversarial attack. Rooms $2$ and $3$ are the protected sub-systems. In the remainder of this section, we study two scenarios.

In the first scenario, we consider that the system is given one safety constraint $x\in\mathcal{C}$ for all time $t\geq 0$, where $\mathcal{C}=\{x:h(x)\geq0\}$ and
\begin{equation*}
    h(x)=\left(\frac{\sum_{i=1}^3x_i}{3}-15\right)\left(20-\frac{\sum_{i=1}^3x_i}{3}\right).
\end{equation*}

Using the SOS program in Eqn. \eqref{eq:gamma SOS} and \eqref{eq:beta SOS}, we compute the approximated IRSI and CRSI. We have that $\gamma_1=-22.05$ and $\beta_1=-2.636$. We then synthesize the control input by enforcing constraint \eqref{eq:feasibility SOS}. We show the average temperature $\frac{\sum_{i=1}^3x_i}{3}$ at each time step in Fig. \ref{fig:global}. We observe that $\frac{\sum_{i=1}^3x_i}{3}\in[15,20]$ for all time steps and thus $h(x)\geq 0$ for all time.

\begin{figure}
    \centering
    \includegraphics[scale=0.45]{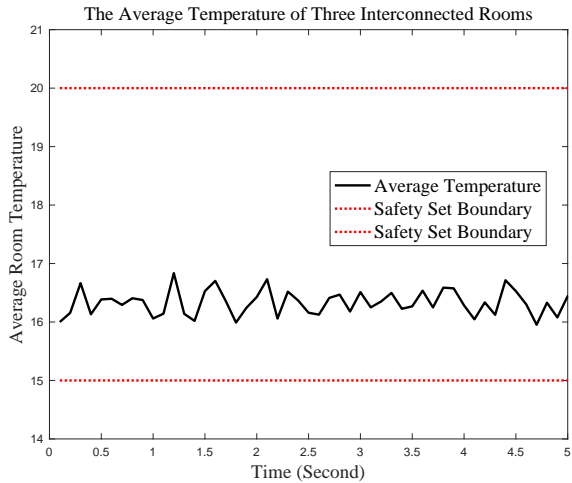}
    \caption{The average temperature of three rooms over $50$ time steps. Room $1$ is compromised and rooms $2,3$ are protected from failure and attack. The average temperature is depicted using solid black line. The boundaries of set $\mathcal{C}$ are shown using red dotted lines.}
    \label{fig:global}
\end{figure}

In the second scenario, we consider a safety set $\mathcal{C}=\{x:h^i(x_i)\geq 0, i=1,2,3\}$. Each function $h^i$ specifies a range for the temperature in room $i$. We consider 
\begin{align*}
    h^1(x_1) &= (16-x_1)(x_1-10),\\
    h^2(x_2) &= (22-x_2)(x_2-15),\\
    h^3(x_3) &= (25-x_3)(x_3-14).
\end{align*}
Since room $1$ is compromised, we can only satisfy the safety constraint $h_1$ by regulating the temperature of rooms $2$ and $3$ and utilizing the coupling term $w(x_{2}+x_{3}-2x_1)$. To ensure the satisfaction of $h^1(x_1)\geq 0$, we introduce the following constraint over $x_2$ and $x_3$ when synthesizing the control policies for rooms $2$ and $3$:
\begin{multline}\label{eq:translated constraint}
    -\frac{\partial h^1}{\partial x_1}\left[\frac{1}{\delta}(2wx_1 + yx_1-z(T_h-x_1)u_1)\right]-\eta^1(h^1(x_1))\\\leq\frac{\partial h^1}{\partial x_1}\left[\frac{1}{\delta}(w(x_{2}+x_{3}) + yT_e)\right] 
\end{multline}
where $\eta^1(\cdot)$ is a class $\mathcal{K}$ function. When $x_2$ and $x_3$ are chosen such that constraint \eqref{eq:translated constraint} is met regardless of $u_1$ for any $x_1\in[10,16]$, we have that $h^1(x_1)\geq 0$ is satisfied. To this end, we let $u_1$ be chosen as the worst-case one with different values of $x_1$. By imposing inequality \eqref{eq:translated constraint} as an additional constraint when synthesizing $u_2$ and $u_3$, we can guarantee the satisfaction of the safety constraint $x\in\mathcal{C}$, even though room $1$ is compromised. We remark that constraint \eqref{eq:translated constraint} needs to be compatible with constraints $h^2(x_2)\geq 0$ and $h^3(x_3)\geq 0$ to guarantee the feasibilities of $u_2$ and $u_3$. One can verify that when $u_1\geq 6.2498$, incorporating constraint \eqref{eq:translated constraint} leads to infeasibility when synthesizing $u_2$ and $u_3$.


Now, for the simulation we let the temperature in each room $i$ be at the boundary of their corresponding safety constraint at the first time step. We then compute the inputs $u_2$ and $u_3$ at each time step and depict the evolution of the temperature in each room in Fig. \ref{fig:temperature}. We plot the temperature of room $1$, $2$, and $3$ using black solid line, blue dash line, and red dash-dotted line, respectively. We observe that the temperature in each room $i$ always satisfies that $h^i(x_i)\geq 0$, indicating that the safety constraint is never violated even though the controller of room $1$ is compromised.
\begin{figure}
    \centering
    \includegraphics[scale = 0.45]{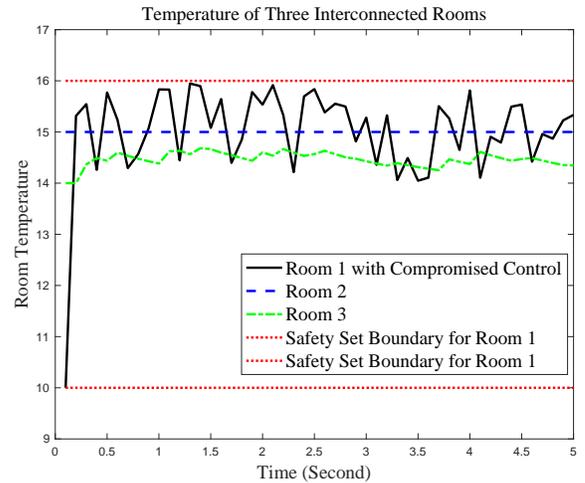}
    \caption{The temperature of each room $1$, $2$ and $3$ over $50$ time steps. Room $1$ is compromised and rooms $2,3$ are protected from failure and attack. The temperature of room $1$ is depicted using solid black line. The temperature of room $2$ is shown in blue dash line. The temperature of room $3$ is shown in red dash-dotted line.}
    \label{fig:temperature}
\end{figure}
\section{Conclusion}\label{sec:conclusion}

In this paper, we studied the problem of safety-critical control synthesis of CPS with multiple interconnected sub-systems. We considered that a set of sub-systems are vulnerable in the sense that their controllers may incur random failures or malicious attacks. For the vulnerable sub-systems we introduced resilient-safety indices (RSIs) bounding the worst-case impacts of vulnerable systems towards the specified safety constraints. The sign of RSI indicates the contribution of vulnerable sub-system in either satisfying or violating the corresponding safety constraint whereas the magnitude quantifies such contribution. We provided a sufficient condition for the control policies in the non-vulnerable sub-systems so that the safety constraints are satisfied in the presence of failure or attack in the vulnerable sub-systems. We formulated sum-of-squares optimization programs to compute the RSIs and safety-ensuring control policies. Control policy in each sub-system can be computed independently using our proposed algorithm. We presented two special cases for which the RSIs can be found more efficiently. We demonstrated the usefulness of our proposed approach using an example on temperature regulation of interconnected rooms.

\bibliographystyle{IEEEtran}
\bibliography{MyBib}

\end{document}